\newtheorem{theorem}{Theorem}[section]
\newtheorem{lemma}[theorem]{Lemma}
\theoremstyle{definition}
\newtheorem{xca}[theorem]{Exercise}
\theoremstyle{remark}
\numberwithin{equation}{section}
\begin{document}

\title{Convex and linear models of NP-problems}

\author{Sergey Gubin}
\email{sgubin@genesyslab.com}


\subjclass[2000]{Primary 68Q15, 68R10, 90C57}



\keywords{Computational complexity, Algorithms, Combinatorial optimization}

\begin{abstract}
Reducing the NP-problems to the convex/linear analysis on the Birkhoff polytope.
\end{abstract}

\maketitle

\section*{Introduction}
\label{Introduction}

Since the classical works of J. Edmonds \cite[and others]{edmonds}, linear modeling became a common technique in combinatorial optimization \cite[and others]{kp, yan, bond, tsp, schrijver, graph}. Often, the linear models are expressed with some constrains on the incidence vector. The major benefit of this approach is the symmetry of the resulting model: the resulting equations are an invariant under relabeling. The major disadvantage of the approach is difficulty to express the constrains explicitly due to their size and structural complexity \cite[and others]{kp, yan, bond}.
\newline\indent
In this work, the Subgraph Isomorphism Problem \cite{cook0, karp, sf, gi} is taken as a basic NP-problem. The adjacency and incidence matrices are used to express the linear and convex models explicitly. In such asymmetric models, the unknown is a relabeling. The relabeling is presented with an unknown permutation matrix. That reduces the NP-problems to the linear/convex analysis on the Birkhoff polytope \cite{bir}. 
 
\section{Adjacency matrix models}
\label{Inequalities}

Let's take the Subgraph Isomorphism Problem \cite{cook0, karp, sf}: whether a given multi digraph $g$ contains a subgraph which is isomorphic with another given multi digraph $s$. That is a NP-complete problem.
\newline\indent
Let $n$ and $m$ be powers of vertex-sets of $g$ and $s$, appropriately. Based on a node labeling/enumeration, let's construct adjacency matrices of these digraphs  - matrices $G$ and $S$, appropriately. In terms of these matrices, the problem is a compatibility problem for the following quadratic system where the unknown is permutation matrix $X = (x_{ij})_{n\times n}$:
\begin{equation}
\label{eIn1}
P_{mn} X^T G X P_{mn}^{T} \geq S \\
\end{equation}
Here, matrix $P_{mn}$ is a truncation:
$$
P_{mn} = (U_{m} ~ 0)_{m\times n},
$$
- where matrix $U_{m}$ is the union matrix of size $m\times m$.
\newline\indent 
Permutation matrix $X$ presents all $n!$ possible ways to label vertices of (multi) digraph $g$. Compatibility of system \ref{eIn1} means that there is such a vertex labeling of digraph $g$ that the appropriate adjacency matrix of that digraph will ``cover'' the adjacency matrix of (multi) digraph $s$. Such a labeling of $g$ would make the positive solution for instance $(g,s)$ the self evident. Obviously, if the system is incompatible, then instance $(g,s)$ has solution ``NO''. 
\newline\indent
Model \ref{eIn1} is asymmetric \cite{yan} because relabeling of $g$ rotates solutions of system \ref{eIn1} over all permutation matrices. The matrices are vertices of the Birkhoff polytope.
\newline\indent
Let's call matrices $G$ and $S$ the problem's instance and pattern, appropriately. The matrix pairs (or more precisely, the pair's conjugacy classes over the group of permutation matrices) parametrize the whole NP zoo. Let us illustrate that with several examples \cite{edmonds, cook0, karp, sf, gi, tsp, schrijver, graph}:
\begin{description}
\item[(Sub)GI] 
Instance and pattern are adjacency matrices of given (multi di-) graphs. In case of the GI problem: $P_{mn} = U_n$ and sign ``$\geq$'' is replaced with sign ``=''.
\item[Clique] 
Instance is an adjacency matrix of a given (multi di-) graph. Pattern is a matrix whose diagonal elements are $0$ and remaining elements are $1$:
\[
S = (1)_{m\times m} - U_m = \left (\begin{array}{cccc}
0& 1&\ldots&1 \\
1&0& \ldots&1 \\
\vdots&&\ddots &\vdots\\
1&1&\ldots  &0 \\
\end{array} \right )_{m\times m}
\]
\item[HC]
Instance is adjacency matrix of a given (multi di-) graph. Pattern is any circular permutation matrix, for example: 
\begin{equation}
\label{e:hc}
S = \left (\begin{array}{ccccc}
0& 0&\ldots &0&1 \\
1&0& \ldots&0 &0 \\
0&1&\ldots&0 &0 \\
\vdots&&\ddots &\ddots &\vdots\\
0&0&\ldots       &1&0 \\
\end{array} \right )_{n\times n}
\end{equation}
\item[HP]
The same as the above except one 1 is poked out:
\[
S = \left (\begin{array}{ccccc}
0& 0&\ldots &0&0 \\
1&0& \ldots&0 &0 \\
0&1&\ldots&0 &0 \\
\vdots&&\ddots &\ddots &\vdots\\
0&0&\ldots       &1&0 \\
\end{array} \right )_{n\times n}
\]
\item[Matching]
In this case, $m$ is an even number. Pattern $S$ can be, for example, the following matrix:
\[
S = \left( \begin{array}{cccccc}
0&1&0&0&0&\ldots \\
0&0&0&0&0&\ldots \\
0&0&0&1&0&\ldots \\
0&0&0&0&0&\ldots \\
\vdots&\vdots&\vdots&\vdots&\ddots&\ddots\\
\end{array} \right )_{m \times m},
\]
- where $1$ and $0$ are altering on the over-diagonal.
\item[Perfect Matching]
The same as the above except $m = n$.
\item[3-SAT]
Let $f$ be a 3-SAT instance:
\[
f = \bigwedge_{i=1}^{n} L_{i1} \vee L_{i2} \vee L_{i3}.
\]
Let's arbitrarily enumerate strings in the truth tables of the given clauses. By definition, two strings are compatible if they are consistent and equal $true$:
\[
T(i,\alpha,j,\beta) = (L_{i1} \vee L_{i2} \vee L_{i3})|_{\tau_{i\alpha}}~\wedge ~ (L_{j1} \vee L_{j2} \vee L_{j3})|_{\tau_{j\beta}} = true,
\]
- where $\tau_{xy}$ is $y$-th truth assignment for $x$-th clause, i.e. $y$-th strings in the truth table for $x$-th clause. Compatibility box for clauses $i$ and $j$ is the following matrix $8\times 8$:
\[
B_{ij} = (T(i,\alpha,j,\beta))_{8\times 8}.
\]
The strings' compatibility matrix is the instance:
\[
G = (B_{ij})_{n\times n},
\]
- where diagonal boxes are the union matrices of size $8\times 8$. The pattern is a box matrix with $8\times 8$ boxes: all elements in the boxes are $0$ except $(1,1)$-elements which are $1$:
\[
C_{ij} = \left( \begin{array}{ccc}
1&0&\ldots \\
0&0&\ldots \\
\vdots&\vdots&\ddots\\
\end{array} \right )_{8 \times 8}
\]
Compatibility of system \ref{eIn1} means that there is a true assignment satisfying the given 3-SAT instance.
\item[2-SAT]
The same as the above, except the boxes are $4 \times 4$.
\item[SAT]
Let $f$ be a SAT instance:
\[
f = \bigwedge_{i=1}^{n} c_i, ~ c_i = \bigvee_{\alpha=1}^{n_i} L_{i\alpha}
\]
Let's enumerate literals in each clause. Let's build a compatibility box for every two clauses. The box is a rectangular matrix whose elements are $0$ or $1$ depending on whether the appropriate literals in the clauses are complimentary:
\[
B_{ij} = (1 - \delta(L_{i\alpha}, \bar{L}_{j\beta}))_{n_i \times n_j},
\]
- where $\delta(a,b)$ is the Kronecker delta:
\[
\delta(a,b) = \left \{ \begin{array}{cl}
1, & a = b \\
0, & a \neq b \\
\end{array} \right.
\]
Compatibility matrix is a box matrix of the compatibility boxes. Each clause is presented in this box matrix with one box-row and one box-column with the same indexes. The compatibility matrix is the instance:
\[
G = (B_{ij})_{n\times n}.
\]
- where diagonal boxes $B_{ii}$ are the union matrices of size $n_i\times n_i$. Pattern is a box matrix with the same structure but the boxes are filled with $0$ except their upper left elements which are $1$:
\[
C_{ij} = \left( \begin{array}{ccc}
1&0&\ldots \\
0&0&\ldots \\
\vdots&\vdots&\ddots\\
\end{array} \right )_{n_i \times n_j}
\]
Compatibility of system \ref{eIn1} means that the disjunctive normal form of the SAT-instance has an implicant.
\end{description}
The examples are examples of reduction to the Subgraph Isomorphism Problem.
\newline\indent
System \ref{eIn1} allows different exact and approximate methods to reduce the number of options (the number of non-deterministic solutions to check) or to solve the problem.
\newline\indent
For example, let pair $(G,S)$ define a Clique instance. Because matrix multiplication is a combination of operations ``+'' and ``$\times$'' and all matrices involved in system \ref{eIn1} are the non-negative matrices, inequalities \ref{eIn1} will be true for powers of matrices $G$ and $S$, as well. Thus, graph $g$ can be depleted by comparing elements of matrix $G^k$ with the value of non-diagonal elements of matrix $S^k$, $k \geq 2$. For digraphs and $k=2$, the procedure can be sketched as follows:
\begin{description}
\item[Step 1]
Compute matrix $G^2 = (a_{ij})_{n\times n}$;
\item[Step 2]
Rid of all those edges $(i,j)$ for which 
\[
a_{ij} < m-2 ~\vee ~ a_{ji} < m-2.
\]
\end{description}
Iteration of this procedure $O(n-m)$ times can reduce the instance's dimension or even solve the instance. The method can be modified for the Maximum Clique. One would start with pattern matrix $S$ of size $n\times n$ and then reduce the pattern's size using the procedure.

\subsection{Convex models}

Let's fulfill pattern graph $s$ with $n-m$ isolated vertices. Then, sizes of matrices $G$ and $S$ will be equal and the truncation matrix in formula \ref{eIn1} will become the union matrix:
$$
n = m ~ \Rightarrow ~ P_{mn} = U_n.
$$
Because all matrices involved in system \ref{eIn1} are the non-negative matrices and matrix multiplication involves only summation and multiplication of numbers, the system can be rewritten as follows:
$$
G X \geq X S \\
$$
Let's relax the last system by replacing permutation matrix $X$ with a double stochastic matrix:
\begin{equation}
\label{eIn121}
\left \{ \begin{array}{l}
G X \geq X S \\
\sum_{i} x_{ij} \leq 1 \\
\sum_{j} x_{ij} \leq 1 \\
x_{ij} \geq 0 \\
\end{array} \right.
\end{equation}
\begin{theorem}
\label{t:convex}
NP-instance $(G,S)$ has solution ``YES'' iff value $\sqrt{n}$ is the solution of the following convex program:
\begin{equation}
\label{eIn122}
\sqrt{\sum_{ij} x^2_{ij}} ~ \rightarrow ~ \max,
\end{equation}
- under constrains \ref{eIn121}.
\end{theorem}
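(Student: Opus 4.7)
The plan is to view (\ref{eIn122}) as the maximization of a convex function (the Frobenius norm) over the convex polytope of doubly substochastic matrices, and identify the extreme points that achieve the bound $\sqrt{n}$ as precisely the permutation-matrix certificates for a YES instance of $(G,S)$.

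First I would establish the structural inequality: every $X$ feasible in (\ref{eIn121}) satisfies $\sum_{ij} x_{ij}^2 \leq n$, with equality if and only if $X$ is a permutation matrix. The argument is elementary: the constraints $\sum_i x_{ij} \leq 1$, $\sum_j x_{ij} \leq 1$, $x_{ij} \geq 0$ force $0 \leq x_{ij} \leq 1$, so $x_{ij}^2 \leq x_{ij}$; summing gives $\sum_{ij} x_{ij}^2 \leq \sum_{ij} x_{ij} \leq n$. Equality forces each $x_{ij} \in \{0,1\}$ together with all row and column sums equal to $1$, i.e., $X$ is a permutation matrix. In particular, the objective in (\ref{eIn122}) is bounded above by $\sqrt{n}$ on the feasible set.

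For the forward direction, suppose $(G,S)$ has answer YES. After padding $s$ with $n-m$ isolated vertices, (\ref{eIn1}) specializes to $X^T G X \geq S$ for some permutation matrix $X$, which by conjugation (using $X X^T = I$) is equivalent to $GX \geq XS$. This $X$ is feasible in (\ref{eIn121}) and achieves $\sqrt{\sum_{ij} x_{ij}^2} = \sqrt{n}$; combined with the upper bound, the maximum equals $\sqrt{n}$. For the reverse direction, suppose the maximum equals $\sqrt{n}$. The feasible set is compact and the objective continuous, so the maximum is attained at some feasible $X^\star$. By the structural inequality $X^\star$ must be a permutation matrix, and its feasibility constraint $G X^\star \geq X^\star S$ conjugates to $(X^\star)^T G X^\star \geq S$, which is precisely the compatibility of system (\ref{eIn1}) for the padded problem, i.e. a YES certificate for $(G,S)$.

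No single step poses a genuine obstacle: the only ingredient beyond bookkeeping is the entrywise bound $x_{ij} \leq 1$ obtained from the row/column-sum constraints together with non-negativity, which is what pins down the extreme points of the doubly substochastic relaxation as exactly the permutation matrices. Everything else is the standard equivalence between $X^T G X \geq S$ and $G X \geq X S$ for permutations, plus the observation that a convex objective over a compact convex set attains its maximum at an extreme point.
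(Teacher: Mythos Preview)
Your argument is correct and in fact more complete than the paper's own proof. The paper simply invokes the Birkhoff--von~Neumann theorem to assert that over the Birkhoff polytope the Frobenius norm attains its maximum $\sqrt{n}$ precisely at permutation matrices, and leaves the link to the ``YES'' certificate implicit. Your elementary chain $x_{ij}^2 \le x_{ij}$ (from $0\le x_{ij}\le 1$) followed by $\sum_{ij} x_{ij}^2 \le \sum_{ij} x_{ij} \le n$ reaches the same conclusion without appealing to Birkhoff--von~Neumann, handles directly the \emph{sub}stochastic constraints $\sum_i x_{ij}\le 1$, $\sum_j x_{ij}\le 1$ that actually appear in (\ref{eIn121}), and gives a clean equality analysis forcing $X$ to be a full permutation matrix. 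You also spell out both implications and the conjugation $X^TGX\ge S \Leftrightarrow GX\ge XS$ for permutation $X$, which the paper leaves to the reader.

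One minor slip in your closing commentary, which does not affect the proof: the extreme points of the doubly \emph{sub}stochastic polytope are all \emph{partial} permutation matrices, not only the full permutations. Your argument never actually uses that extreme-point claim, since you characterize the norm-$\sqrt{n}$ points directly; so the proof stands as written, and only that aside should be corrected.
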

\begin{proof}
Due to the Birkhoff-von Neumann theorem about the double stochastic matrices \cite{bir}, value $\sqrt{n}$ is the solution of program \ref{eIn122} on the Birkhoff polytope
\begin{equation}
\label{eIn123}
\sum_{i} x_{ij} = 1, ~ \sum_{j} x_{ij} = 1, ~ x_{ij} \geq 0.
\end{equation}
Matrix $X$ delivers the maximum iff matrix $X$ is a permutation matrix. Permutation matrices are the extreme points of the Birkhoff polytope \ref{eIn123}.
\end{proof}
Convex program \ref{eIn122} under constrains \ref{eIn121} can be solved in polynomial time by the ellipsoid method \cite{nem1} or by the inner point method \cite{nem2}.
\newline\indent
Convex program \ref{eIn122} under constrains \ref{eIn121} is an asymmetric polynomial size model for NP-problems. The asymmetry is due to the fixed vertex labeling in which adjacency matrix $G$ is written. Relabeling of digraph $g$ will rotate the program's solutions over all vertices of the Birkhoff polytope \ref{eIn123}.
\newline\indent
If a NP-problem instance has the negative solution - solution ``NO'', - then, due to theorem \ref{t:convex}, there are two cases: the problem's matrix $G$ does not satisfy system \ref{eIn121}, at all (an easy case); or, due to the Birkhoff-von Neumann theorem about double stochastic matrices and the Carathodory theorem about convex hull, the only solutions of the system are certain convex combinations of $\alpha$ permutation matrices:
\begin{equation}
\label{eIn124}
2 \leq \alpha \leq (n-1)^2 + 1,
\end{equation}
- because the Birkhoff polytope has dimension $(n-1)^2$.
\newline\indent
Obviously, constrains \ref{eIn121} in theorem \ref{t:convex} may be replaced with the following system:
\begin{equation}
\label{eIn125}
\left \{ \begin{array}{l}
X^T G \geq S X^T \\
\sum_{i} x_{ij} \leq 1 \\
\sum_{j} x_{ij} \leq 1 \\
x_{ij} \geq 0 \\
\end{array} \right.
\end{equation}
\indent
Theorem \ref{t:convex} reduces the Subgraph Isomorphism Problem to the convex programming which is a P-problem \cite{nem1,nem2}. The theorem can be re-formulated:
\begin{theorem}
\label{t:convex2}
NP-instance $(G,S)$ has solution ``YES'' iff the following convex system is compatible:
\[
\left \{ \begin{array}{l}
\sqrt{\sum_{ij} x^2_{ij}} = \sqrt{n} \\
\\
G X \geq X S \\
\sum_{i} x_{ij} \leq 1 \\
\sum_{j} x_{ij} \leq 1 \\
x_{ij} \geq 0 \\
\end{array} \right.
\]
\end{theorem}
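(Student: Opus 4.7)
The plan is to deduce Theorem \ref{t:convex2} directly from Theorem \ref{t:convex} by observing that the listed system is exactly the feasibility version of the maximization in Theorem \ref{t:convex}: adding the equation $\sqrt{\sum_{ij} x_{ij}^2} = \sqrt{n}$ to constraints \ref{eIn121} amounts to asking for a feasible point that attains the value $\sqrt{n}$ in the objective \ref{eIn122}.

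For the direction ``YES $\Rightarrow$ compatible'', I would invoke Theorem \ref{t:convex}: the ``YES'' answer forces the maximum of \ref{eIn122} under \ref{eIn121} to equal $\sqrt{n}$, and any maximizer (a permutation matrix corresponding to a valid relabeling) is a feasible point of \ref{eIn121} on which the objective equals $\sqrt{n}$; that point then satisfies every line of the system in Theorem \ref{t:convex2}, proving compatibility.

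For the direction ``compatible $\Rightarrow$ YES'', I would argue that any feasible $X$ for the listed system must in fact be a permutation matrix. The substochastic constraints $\sum_i x_{ij}\le 1$, $\sum_j x_{ij}\le 1$, $x_{ij}\ge 0$ imply $0\le x_{ij}\le 1$, hence $x_{ij}^2\le x_{ij}$ pointwise, and therefore
\[
\sum_{ij} x_{ij}^2 \;\le\; \sum_{ij} x_{ij} \;\le\; n,
\]
with equality forcing both $x_{ij}\in\{0,1\}$ for all $i,j$ and all row and column sums equal to $1$; i.e., $X$ is a permutation matrix. Since this $X$ additionally satisfies $GX\ge XS$, the original quadratic system \ref{eIn1} (in the padded case $m=n$ where $P_{mn}=U_n$) is compatible, giving the ``YES'' answer.

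The only real step beyond bookkeeping is the pointwise bound $x_{ij}^2\le x_{ij}$ and the observation that equality in $\sum x_{ij}^2\le n$ forces $X$ to be a vertex of the Birkhoff polytope; this is just the entrywise calculation above and does not present any serious obstacle. Once that is in place, Theorem \ref{t:convex2} is a straightforward rephrasing of Theorem \ref{t:convex} as a feasibility problem rather than an optimization.
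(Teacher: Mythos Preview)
Your proposal is correct and follows essentially the same route as the paper: both treat Theorem~\ref{t:convex2} as the feasibility reformulation of Theorem~\ref{t:convex}, the key point being that among substochastic matrices only permutation matrices attain Euclidean norm $\sqrt{n}$. The paper's proof is a one-line remark (``Value $\sqrt{n}$ is the Euclidean norm of the permutation matrices''), implicitly leaning on the Birkhoff--von Neumann argument already used in Theorem~\ref{t:convex}; your version spells out the same fact via the elementary pointwise bound $x_{ij}^2\le x_{ij}$, which is a self-contained and slightly more explicit justification but not a genuinely different idea.
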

\begin{proof}
Value $\sqrt{n}$ is the Euclidean norm of the permutation matrices of size $n\times n$.
\end{proof}

\subsection{Asymmetric linear models}
\label{s:asymmetric}

The following example shows that compatibility of systems \ref{eIn121} and \ref{eIn125} is not sufficient for making decisions.
\begin{xca}
Let the instance and pattern matrices be as follows:
$$
G = \left (\begin{array}{cc}
0 & 1 \\
1 & 0 \\
\end{array} \right ),
~
S = \left (\begin{array}{cc}
1 & 0 \\
0 & 1 \\
\end{array} \right )
$$
The only permutation matrices for $n=2$ are
$$
X_1 = \left (\begin{array}{cc}
0 & 1 \\
1 & 0 \\
\end{array} \right ),
~
X_2 = \left (\begin{array}{cc}
1 & 0 \\
0 & 1 \\
\end{array} \right )
$$
Half-sum of the matrices satisfies systems \ref{eIn121} and \ref{eIn125}:
$$
\frac{X_1}{2} + \frac{X_2}{2} = (\frac{X_1}{2} + \frac{X_2}{2})^T = \frac{1}{2} \left (\begin{array}{cc}
1 & 1 \\
1 & 1 \\
\end{array} \right )
$$
$$
S(\frac{X_1}{2} + \frac{X_2}{2}) = (\frac{X_1}{2} + \frac{X_2}{2})S = \frac{1}{2}\left (\begin{array}{cc}
1 & 1 \\
1 & 1 \\
\end{array} \right )
$$
$$
G(\frac{X_1}{2} + \frac{X_2}{2}) = (\frac{X_1}{2} + \frac{X_2}{2})G = \frac{1}{2} \left (\begin{array}{cc}
1 & 1 \\
1 & 1 \\
\end{array} \right )
$$
Nevertheless, instance $(G,S)$ has solution ``NO'': value of criterion \ref{eIn122} on the solution is $1 < \sqrt{2}$.
\end{xca}
The following lemma clarifies the example.
\begin{lemma}
Let pattern $S$ be a permutation matrix. Let $\sigma$ be the set of all permutation matrices $Y$ which have the following property:
$$
Y \leq G.
$$
NP-problem instance $(G,S)$ has solution ``YES'' iff 
$$
\exists~Y \in \sigma: Y \in Cl(S).
$$
\end{lemma}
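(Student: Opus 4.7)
The plan is to unwind the definitions: after padding the pattern graph with $n-m$ isolated vertices so that $m=n$ and $P_{mn}=U_n$, system \ref{eIn1} reduces to $X^T G X \geq S$, i.e.\ $G \geq X S X^T$. When $S$ is a permutation matrix, so is $X S X^T$ for every permutation $X$, and the set of all matrices of the form $X S X^T$ is exactly the conjugacy class $Cl(S)$. So I would simply identify $Y := X S X^T$ with a witness simultaneously belonging to $\sigma$ (because $Y \leq G$) and to $Cl(S)$ (by construction), and run the same identification in reverse for the converse.

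More concretely, for the forward direction I would assume instance $(G,S)$ has solution ``YES''. By theorem \ref{t:convex} (or directly from \ref{eIn1} with $P_{mn}=U_n$) there is a permutation matrix $X$ with $X^T G X \geq S$. Multiplying by $X$ on the left and $X^T$ on the right (both are non-negative, and $XX^T=I$) yields $G \geq X S X^T$. Setting $Y = X S X^T$ gives a permutation matrix with $Y \leq G$, hence $Y \in \sigma$, and clearly $Y \in Cl(S)$.

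For the converse, suppose $Y \in \sigma \cap Cl(S)$. Write $Y = X S X^T$ for some permutation matrix $X$. The hypothesis $Y \leq G$ becomes $X S X^T \leq G$, which after multiplying by $X^T$ on the left and $X$ on the right rearranges to $X^T G X \geq S$; that is precisely the compatibility of system \ref{eIn1} with the padded pattern, so $(G,S)$ has solution ``YES''.

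No step here poses a genuine obstacle; the only delicate point is the conjugation step, where one must use that $X$ is a \emph{permutation} matrix (not merely a doubly stochastic one) so that $X X^T = I$ and the inequality can be transported back and forth without degradation. This is exactly the gap highlighted by the preceding exercise, where the doubly stochastic half-sum $(X_1+X_2)/2$ satisfies the relaxed inequalities but is not a permutation matrix, so the conjugation argument collapses.
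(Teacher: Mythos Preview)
Your proof is correct and follows the same underlying idea as the paper: the instance has solution ``YES'' precisely when some conjugate $XSX^{T}$ of the permutation matrix $S$ lies below $G$, and this is exactly the condition $\sigma\cap Cl(S)\neq\emptyset$. The paper compresses the whole argument into the single identity $Cl(\sigma)=\bigcup_{Y\in\sigma}Cl(Y)$, leaving the conjugation step implicit; you have simply spelled that step out in both directions.
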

\begin{proof}
$$
Cl(\sigma) = \bigcup_{Y \in \sigma} Cl(Y).
$$
\end{proof} 
For example 1,
$$
G = X_2 \notin Cl(S) = \{X_1\}.
$$
\indent
Let $L$ be any linear functional on $R^{n^2}$. Then, the following asymmetric linear program models NP-problems:
\begin{equation}
\label{e:lp}
L(X) ~ \rightarrow ~ \max,
\end{equation}
- under constrains \ref{eIn121} or \ref{eIn125}. NP-instance $(G,S)$ has solution ``YES'' iff there are the constrains' extreme points among the optimums of the program.
\newline\indent
If system \ref{eIn121} or system \ref{eIn125} has no solutions at all, then instance $(G,S)$ has solution ``NO''. If the systems have a solution $X$ and that solution is not a permutation matrix, then the solution is a double stochastic matrix. Then, in accordance with the Birkhoff - von Neumann theorem:
\[
X \in \mbox{conv}\{X_1,X_2,\ldots,X_\alpha\},
\]
- where $X_i$ are permutation matrices and $\alpha$ is in bounds \ref{eIn124}. The presentation is not unique, but it takes $O(n^2)$ time to find a permutation matrix which participates in one of the presentations:
\begin{description}
\item[Step 1]
Select any non-zero element in matrix $X$. Let $x_{i_1j_1}$ be the selection;
\item[Step 2]
Select any non-zero element which is not in row $i_1$ nor in column $j_1$. Such element exists because $X$ is a double stochastic matrix. Let $x_{i_2j_2}$ be the selection;
\item[Step 3]
Select any non-zero element which is not in rows $i_1,~ i_2$ nor in columns $j_1, ~j_2$. Such element exists because $X$ is a double stochastic matrix. Let $x_{i_3j_3}$ be the selection;
\item[Steps 4$\div n$]
And so on until $n$ elements will be selected. 
\item[Step $n+1$]
Replace the selected elements with $1$ and replace the rest of elements with $0$. The resulting matrix $X_1$ is a permutation matrix which participates in the presentation of $X$ as a convex combination of permutation matrices.
\end{description}
If permutation matrix $X_1$ is a solution of system \ref{eIn121} or system \ref{eIn125}, then NP-instance $(G,S)$ has solution ``YES''. Otherwise, the systems can be fulfilled with the following inequality:
\[
\sum_{\mu=1}^n x_{i_\mu j_\mu} \leq n-2,
\]
- where $i_\mu$ and $j_\mu$ are the indexes selected in the above procedure. In accordance with the ellipsoid/separation method \cite{gls}, the inequality can be used for the next iteration of the ellipsoid method \cite{Khach}. 

\subsection{A symmetric linear model}

Let's fulfill pattern graph $s$ with $n-m$ isolated vertices. Then, sizes of matrices $G$ and $S$ will be equal and the truncation matrix in formula \ref{eIn1} will become the union matrix. Then, system \ref{eIn1} can be rewritten:
\begin{equation}
\label{eIn111}
G \geq X S X^T
\end{equation}
- where permutation matrix $X$ is the unknown.
\newline\indent
Let's enumerate all permutation matrices:
$$
X_1, ~ X_2, ~ \ldots ~ X_{n!}.
$$
Then, integer quadratic system \ref{eIn111} can be replaced with the following integer linear system:
\begin{equation}
\label{eIn112}
\left \{ \begin{array}{l}
\sum_{i}\lambda_{i}X_{i}S X_{i}^T \leq G \\
\\
\lambda_i \in \{0,1\} \\
\end{array} \right.
\end{equation}
- where numbers $\lambda_i$ are the unknown. The system can be relaxed.
\begin{theorem}
\label{t:symmetric}
System \ref{eIn112} is compatible iff the following system is compatible:
\begin{equation}
\label{eIn113}
\left \{ \begin{array}{l}
\sum_{i}\lambda_{i}X_{i}S X_{i}^T \leq G \\
\\
\sum_i \lambda_i = 1, ~ \lambda_i \geq 0 \\
\end{array} \right.
\end{equation}
- where numbers $\lambda_i$ are the unknown.
\end{theorem}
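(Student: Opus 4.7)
The plan is to prove the two implications of the equivalence separately, with the nontrivial content concentrated in showing that compatibility of \ref{eIn113} implies compatibility of \ref{eIn112}. Throughout, I write $T_i := X_i S X_i^T$ for the conjugated pattern and note that $T_i$ is obtained from $S$ by a simultaneous row and column permutation, so the multiset of its entries coincides with that of $S$; in particular, when $S$ is a $0$--$1$ matrix, every $T_i$ is also a $0$--$1$ matrix.

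The forward implication is immediate. A nontrivial $\{0,1\}$-valued solution of \ref{eIn112} (as is required if \ref{eIn112} is genuinely to replace the single-permutation condition \ref{eIn111}) has some $\lambda_j = 1$, and by non-negativity the single term $T_j$ already satisfies $T_j \leq \sum_i \lambda_i T_i \leq G$. Keeping $\lambda_j = 1$ and zeroing the remaining coordinates then yields a feasible point for \ref{eIn113}.

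For the reverse implication, I would take a solution $\{\lambda_i\}$ of \ref{eIn113} and fix any index $i^*$ with $\lambda_{i^*} > 0$. The claim is that $T_{i^*} \leq G$ entry-wise, proved by a case split on the entries of $G$. For a position $(a,b)$ with $G_{ab} = 0$, the constraint $\sum_i \lambda_i (T_i)_{ab} \leq 0$ together with non-negativity forces $(T_i)_{ab} = 0$ for every $i$ in the support of $\lambda$, and in particular $(T_{i^*})_{ab} = 0$. For a position with $G_{ab} \geq 1$, the $0$--$1$ property of $T_{i^*}$ gives $(T_{i^*})_{ab} \leq 1 \leq G_{ab}$. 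The indicator vector with a single $1$ at $i^*$ is then a $\{0,1\}$-valued solution of \ref{eIn112}.

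The main obstacle will be the second case of the above case split, since it relies on the $0$--$1$ hypothesis on $S$: if $S$ carried a genuine multiplicity larger than $1$ somewhere, one could arrange $\tfrac{1}{2} T_i + \tfrac{1}{2} T_j \leq G$ with neither $T_i \leq G$ nor $T_j \leq G$, so the relaxation would be strictly weaker than the integer system. The theorem therefore tacitly presumes that $S$ is a $0$--$1$ matrix, which is the standard framing of the Subgraph Isomorphism Problem, and I would flag this hypothesis explicitly at the start of the proof.
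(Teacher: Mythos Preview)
Your argument is correct and follows the same route as the paper: for the forward direction, pick an index with $\lambda_{i_1}=1$ and keep only that coordinate; for the reverse, pick any $i^*$ with $\lambda_{i^*}>0$ and use the $(0,1)$ property of the conjugates $X_iSX_i^T$ to conclude $X_{i^*}SX_{i^*}^T\leq G$. The paper compresses your case split into the single sentence ``because all matrices participating in \ref{eIn113} are $(0,1)$-matrices, $G\geq X_{i_1}SX_{i_1}^T$,'' but the content is identical. Your version is in fact slightly more careful: you isolate the hypothesis as a $(0,1)$ assumption on $S$ alone (allowing $G$ to carry multiplicities), and you flag that system \ref{eIn112} as literally written admits the all-zero vector, so that ``compatible'' must be read as ``admits a nontrivial solution'' --- a point the paper glosses over.
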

\begin{proof}
Let numbers $\lambda_i$ solve system \ref{eIn112} and $\lambda_{i_1} = 1$. Then, the following numbers are a solution of system \ref{eIn113}:
$$
\lambda_i = \left \{ \begin{array}{cl}
1, & i = i_1 \\
0, & i \neq i_1 \\
\end{array} \right.
$$ 
Let numbers $\lambda_i$ solve system \ref{eIn113} and $\lambda_{i_1} \neq 0$. Because all matrices participated in \ref{eIn113} are the $(0,1)$-matrices,
$$
G \geq X_{i_1} S X_{i_1}^T.
$$
Thus, the following numbers are a solution of system \ref{eIn112}:
$$
\lambda_i = \left \{ \begin{array}{cl}
1, & i = i_1 \\
0, & i \neq i_1 \\
\end{array} \right.
$$ 
\end{proof}
Due to theorem \ref{t:symmetric}, the following linear program will solve NP-problem \ref{eIn1}:
\begin{equation}
\label{eIn114}
\sum_i \lambda_i ~ \rightarrow ~ \min_{\lambda_1,\lambda_2,\ldots,\lambda_{n!}},
\end{equation}
- under constrains \ref{eIn113}. The program (or its dual, more precisely) can be tried and solved with the ellipsoid/separation method \cite{Khach,yan,gls}. The separation conditions can be arranged with inequalities
\[
X_i SX_i^T \leq G.
\]
Due to estimation \ref{eIn124}, the addends on the left side of system \ref{eIn113} may be analyzed in chunks of size $(n-1)^2 + 1$.
\newline\indent
Linear program \ref{eIn114}/\ref{eIn113} is a symmetric $n!$-size linear program. The symmetry \cite{yan} is due to the explicit involvement of all permutation matrices in constrains \ref{eIn113}.
\newline\indent
For the HC problem, pattern matrix $S$ is a circular permutation matrix. In this case, constrains \ref{eIn113} are an explicit expression for the TSP polytope. The ATSP with a weight matrix $W$ can be modeled as the following linear program:
$$
(W,~\sum_{i}\lambda_{i}X_i S X_i^T) \rightarrow \min_{\lambda_i},
$$ 
- under constrains \ref{eIn113}. The matrix scalar product $(*,*)$ totals products of the appropriate elements of its multiplicands.
\newline\indent
According to the Yannakakis theorem \cite{yan}, size of the TSP polytope for the symmetric linear program has to be bigger than polynomial. System \ref{eIn113} shows that $n!$ is an upper bound for the size.

\subsection{Miscellaneous}
\label{mis1}

\begin{lemma} 
NP-instance $(G,S)$ has solution ``NO'' iff the following system is incompatible for any permutation matrix $R$:
$$
\left \{ \begin{array}{l}
G X \geq X S \\
X \geq R \\
\sum_{i} x_{ij} \leq 1 \\
\sum_{j} x_{ij} \leq 1 \\
x_{ij} \geq 0 \\
\end{array} \right.
$$
\end{lemma}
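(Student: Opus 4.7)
The plan is to show that the hypothesis $X \geq R$ together with the doubly substochastic constraints actually \emph{forces} $X = R$, so that the compatibility of the system for some permutation $R$ is precisely the existence of a permutation certificate $R$ satisfying $GR \geq RS$, i.e.\ the condition characterizing ``YES''-instances after the isolated-vertex padding used in the preceding subsections.

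\medskip

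\noindent\textbf{Step 1: Force $X = R$.} Fix a permutation matrix $R$ and suppose $X$ satisfies the five constraints. For every position $(i,j)$ with $R_{ij}=1$, $X\geq R$ gives $x_{ij}\geq 1$, while $x_{ij} \leq \sum_k x_{kj}\leq 1$ by the column-sum constraint and nonnegativity, so $x_{ij}=1$. The column-sum and row-sum inequalities then force $x_{kj}=0$ for $k\neq i$ and $x_{ik}=0$ for $k\neq j$. Since $R$ is a permutation matrix, these rules, applied to each of its $n$ ones, pin down every entry of $X$, yielding $X=R$.

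\medskip

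\noindent\textbf{Step 2: Reduce $GX\geq XS$ to the subgraph-isomorphism condition.} After Step~1 the remaining constraint $GX\geq XS$ becomes $GR\geq RS$. Following the convex-model subsection, we have padded $s$ with $n-m$ isolated vertices so that $P_{mn}$ becomes the identity and (\ref{eIn1}) takes the form $X^T G X\geq S$, which, since $X$ is a permutation matrix, is equivalent to $GX\geq XS$. Thus a permutation $R$ satisfies $GR\geq RS$ iff it is a vertex labeling of $g$ whose adjacency matrix covers $S$, i.e.\ iff it witnesses ``YES'' for the instance $(G,S)$.

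\medskip

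\noindent\textbf{Step 3: Conclude the biconditional.} If $(G,S)$ is a ``YES''-instance, let $R$ be a permutation matrix witnessing this; then $X:=R$ solves the system for that choice of $R$, so the system is compatible for at least one $R$. Conversely, if the system is compatible for some $R$, then by Steps~1--2 that very $R$ satisfies $GR\geq RS$, so $(G,S)$ is ``YES''. Contrapositively, $(G,S)$ is ``NO'' iff no $R$ makes the system compatible, which is the claim.

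\medskip

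There is no real obstacle here: the argument is essentially the observation that $X\geq R$ combined with the doubly substochastic constraints is a rigid condition that eliminates the relaxation back to a permutation. The only point that needs care is recording the padding convention so that (\ref{eIn1}) legitimately reduces to $GX\geq XS$; this has already been set up in the convex-models subsection and can simply be cited.
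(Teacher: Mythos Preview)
Your proof is correct and follows the same overall strategy as the paper: both directions hinge on the observation that the constraints $X\geq R$, $\sum_i x_{ij}\leq 1$, $\sum_j x_{ij}\leq 1$, $x_{ij}\geq 0$ force $X=R$, after which $GX\geq XS$ becomes the permutation-witness condition $GR\geq RS$.

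The only difference is in how this forcing is justified. The paper first notes that $X$ is doubly stochastic and invokes the Birkhoff--von~Neumann theorem to write $X$ as a convex combination of permutation matrices, then asserts that $X\geq R$ implies $X=R$. You instead give the direct elementary argument: wherever $R_{ij}=1$ one has $1\leq x_{ij}\leq \sum_k x_{kj}\leq 1$, pinning $x_{ij}=1$ and zeroing out the rest of row~$i$ and column~$j$. Your route is more self-contained and in fact shows that the appeal to Birkhoff--von~Neumann in the paper's proof is not really needed; the row- and column-sum bounds alone already do the work.
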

\begin{proof}
If instance $(G,S)$ has solution ``YES'', there is permutation matrix $X$ satisfying the system for permutation matrix $R=X$. Let matrix $X$ satisfy the system for some permutation matrix $R$. $X$ is a double stochastic matrix. Due to the Birkhoff-von Neumann theorem, $X$ is a convex combination of several permutation matrices. Then, inequality $X \geq R$ implies that $X$ is a permutation matrix: $X=R$. Then, problem $(G,S)$ has solution ``YES''. 
\end{proof}
\begin{lemma}
Let $(G,S)$ be such an instance that matrices $G$ and $S$ allow the following decomposition:
$$
\begin{array}{ll}
G \geq G_1G_2, & G_1 \geq (0)_{n\times n}, ~ G_2 \geq (0)_{n\times n} \\
S \geq S_1S_2, & S_1 \geq (0)_{n\times n}, ~ S_2 \geq (0)_{n\times n} \\
\end{array}
$$
The instance has solution ``YES'' if the following linear system is compatible:
$$
\left \{ \begin{array}{l}
G_1 \geq X S_1 \\
G_2 \geq S_2 X^T \\
\sum_{i} x_{ij} \leq 1 \\
\sum_{j} x_{ij} \leq 1 \\
x_{ij} \geq 0 \\
\end{array} \right.
$$
\end{lemma}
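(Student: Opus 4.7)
My plan is to multiply the two given linear inequalities into a single quadratic matrix inequality and then match the result against the original covering condition \ref{eIn1}. The crucial ingredient is that every matrix in sight---$G_1,G_2,S_1,S_2$ and the candidate $X$---is entrywise non-negative, so matrix multiplication preserves inequalities. Starting from $G_1\geq X S_1$, right-multiplying by $G_2\geq 0$ gives $G_1 G_2 \geq (X S_1) G_2$; then from $G_2 \geq S_2 X^T$ and $X S_1 \geq 0$ one obtains $(X S_1) G_2 \geq (X S_1)(S_2 X^T) = X (S_1 S_2) X^T$. Chaining these with the hypothesis $G \geq G_1 G_2$ yields
\[
G \;\geq\; G_1 G_2 \;\geq\; X (S_1 S_2) X^T.
\]

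To close the argument I would transport this through the $S$-decomposition. For a permutation $X$ the conclusion $G \geq X S X^T$ is equivalent to system \ref{eIn1} in the enlarged $n=m$ setting used throughout this section, so one wants $X(S_1 S_2) X^T \geq X S X^T$, i.e.\ $S_1 S_2$ dominating $S$ entrywise. This is the direction in which the $S$-decomposition must be read to make the chain close: $G \geq G_1 G_2 \geq X (S_1 S_2) X^T \geq X S X^T$. I would flag that the lemma's displayed inequality $S \geq S_1 S_2$ needs to be understood in this dominating sense, since the opposite direction leaves the chain strictly weaker than \ref{eIn1} and does not imply YES.

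The principal obstacle, however, is that the LP only produces a (sub-)doubly-stochastic $X$, whereas a YES-witness for the NP-instance must itself be a permutation matrix. To bridge this gap I would reuse the machinery of section \ref{s:asymmetric}: complete $X$ to a doubly stochastic matrix, apply the Birkhoff--von Neumann theorem to write $X = \sum_i \lambda_i X_i$ over permutation matrices, and extract some $X_i$ from the support by the greedy selection procedure described there. Because the derivation above is quadratic in $X$, monotone averaging does not automatically transmit the two linear constraints $G_1 \geq X S_1$ and $G_2 \geq S_2 X^T$ down to each individual $X_i$; so in the unfavorable case the plan is to iterate by adjoining the separating cut $\sum_\mu x_{i_\mu j_\mu} \leq n-2$ and re-solving the LP, terminating either when some extremal permutation witness passes both constraints (confirming YES) or when the successive cuts render the LP infeasible. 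This rounding step, rather than the purely algebraic multiplication, is where I expect the real work of a rigorous proof to lie.
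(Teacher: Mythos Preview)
Your algebraic chain $G \ge G_1G_2 \ge X(S_1S_2)X^T$ is exactly what the paper does, and you are right to flag the direction of the $S$-decomposition: the paper's own chain continues with $X S_1S_2 X^T \ge X S X^T$, which needs $S_1S_2 \ge S$, not the displayed $S \ge S_1S_2$.

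Where you diverge from the paper is the passage from the doubly-stochastic $X$ to a permutation witness. The paper does \emph{not} iterate with cutting planes. It simply invokes Birkhoff--von~Neumann to write $X \in \mathrm{conv}\{X_1,\dots,X_\alpha\}$ and then appends one further inequality to the chain,
\[
X S X^T \;\ge\; X_1 S X_1^T,
\]
declaring any vertex $X_1$ of the decomposition to be a YES-witness. So the paper's argument is a single line, not an iterative search. Your caution about this step is well placed: that last inequality is false in general (take $n=2$, $S=e_{11}$, $X=\tfrac12(I+J)$ with $J$ the swap, $X_1=I$; then $(XSX^T)_{11}=\tfrac14<1=(X_1SX_1^T)_{11}$). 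The paper offers no justification for it. Your cutting-plane workaround is a different device aimed at exactly this gap, but note that it does not prove the lemma either: the lemma asserts that \emph{mere feasibility} of the LP implies YES, whereas your scheme may slice away the entire feasible region without ever isolating a permutation witness, which would establish nothing. In short, your plan coincides with the paper through the multiplication step and is more scrupulous than the paper at the extraction step, but neither the paper's bare assertion nor your iterative procedure actually closes the argument.
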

\begin{proof}
Let matrix $X$ satisfy the system. Then, $X$ is a convex combination of several permutation matrices $X_i$:
$$
X \in \mbox{conv } \{X_i~|~1 \leq i \leq \alpha \}.
$$
Then,
$$
G \geq G_1G_2 \geq X S_1S_2 X^T \geq X S X^T \geq X_1 S X_1^T,
$$
\end{proof}
Any NP-problem (matrix $S$) has $2^{n^2}$ different instances (matrices $G$). Ultimately, if there would exist the mathematical tables of the ``YES''/''NO''-instances, the tables could be sorted in a way. The binary sorting would reduce the NP-problems to the binary search. The computational complexity of the search would be
$$
O(\log_2 2^{n^2}) = O(n^2).
$$
The ``oracle-tables'' might be a digital/analog computer which would solve, for example, convex program \ref{eIn122}/\ref{eIn121}.

\section{Incidence matrices models}
\label{Equations}

Let's arbitrarily label/enumerate elements of adjacency matrices $G$ and $S$: if an element is equal $a \geq 0$, then that element has $a$ labels (zero-elements have no labels). Let's construct in-incidence matrix $I_{G} = (\alpha_{ij})$: $\alpha_{ij} = 1$ if $i$-th column of $G$ contains $j$-th label; and $\alpha_{ij} = 0$, if otherwise. In the same way, let's construct in-incidence matrix $I_{S}$ for matrix $S$. Also, let's construct out-incidence matrices $O_{G}$ and $O_{S}$ but using rows instead of columns. Direct calculation proves the following decompositions: 
\begin{equation}
\label{eEq0}
G = O_{G}I_{G}^{T}, ~ S = O_{S}I_{S}^{T}.
\end{equation}
\indent
Let $k$ be the total of all elements of $G$; and $l$ be the total of all elements of $S$.  Then matrices $I_{G}$ and $O_{G}$ are $n\times k$; and matrices $I_{S}$ and $O_{S}$ are $m\times l$. In digraph terms: numbers $k$ and $l$ are powers of the arc-sets of (multi) digraphs $g$ and $s$, appropriately \footnote{The decomposition can be done for any rectangular matrix.}.
\newline\indent
The incidence matrices are total unimodular matrices. They are $(0,1)$-matrices with the following structure: there is one and only one $1$ per column. The $0$-rows in out-incidence matrix indicate sinks in the digraph, and the $0$-rows in in-incidence matrix indicate sources. The isolated vertices are presented with $0$-rows in both incidence matrices. From this point of view, the isolated vertices are sinks and sources simultaneously.
\newline\indent
In terms of the incidence matrices, system \ref{eIn1} can be rewritten as follows:
\begin{equation}
\label{eEq1}
\left \{ \begin{array}{l}
P_{mn}XO_{G}ZP_{l k}^{T} = O_{S} \\
\\
P_{mn}XI_{G}ZP_{lk}^{T} = I_{S} \\
\end{array}, \right.
\end{equation}
- where permutation matrices $X=(x_{ij})_{n\times n}$ and $Z=(z_{ij})_{k\times k}$ are the unknown. In digraph terms: $X$ presents all $n!$ ways to label vertices of $g$, and $Z$ presents all $k!$ ways to label arcs of $g$.
\newline\indent
Due to the unimodularity of the incidence matrices, each of the two parts of system \ref{eEq1} has an integral solution if it has a solution at all. The point is the existence of such a common integral solution that
\[
XX^T = U_n, ~ ZZ^T = U_k.
\]
\indent
Let's arbitrarily enumerate all permutation matrices of size $n\times n$ and $k\times k$:
$$
X_1, X_2, \ldots, X_{n!}, ~ Z_1, Z_2, \ldots, Z_{k!}
$$
Let's write the following system:
\begin{equation}
\label{eEq2}
\left \{ \begin{array}{l}
\sum_{i,j} \lambda_{ij}P_{mn}X_{i}O_{G}Z_{j}P_{lk}^{T} = O_{S} \\
\\
\sum_{i,j} \lambda_{ij}P_{mn}X_{i}I_{G}Z_{j}P_{lk}^{T} = I_{S} \\
\\
\lambda_{ij} \in \{0,1\} \\
\end{array} \right.
\end{equation}
- where numbers $\lambda_{ij}$ are unknown.
\begin{lemma}
System \ref{eEq1} has a permutation matrices solution iff system \ref{eEq2} has such a solution $\lambda_{ij}$ that
\begin{equation}
\label{eEq3}
\exists ~ i_1, j_1:~ 
\lambda_{ij} = \left \{ \begin{array}{cl}
1, & i = i_1 ~ \wedge ~ j = j_1 \\
0, & i \neq i_1 ~ \vee ~ j \neq j_1 \\
\end{array} \right.
\end{equation}
\end{lemma}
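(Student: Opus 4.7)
The plan is to prove both implications directly by observing that condition \ref{eEq3} precisely picks out a single term from each of the two sums in system \ref{eEq2}, reducing it to system \ref{eEq1}. The structure closely mirrors the proof of theorem \ref{t:symmetric}.

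For the forward direction ($\Rightarrow$), I would assume permutation matrices $X$ and $Z$ solve system \ref{eEq1}. Since the enumerations $X_1,\ldots,X_{n!}$ and $Z_1,\ldots,Z_{k!}$ exhaust all permutation matrices of the appropriate sizes, there exist indices $i_1, j_1$ with $X = X_{i_1}$ and $Z = Z_{j_1}$. Defining $\lambda_{ij}$ by formula \ref{eEq3} yields a solution of \ref{eEq2}, since all but one term vanishes in each sum and the surviving term is exactly the left side of the corresponding equation in \ref{eEq1}.

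For the reverse direction ($\Leftarrow$), I would assume $\lambda_{ij}$ solves system \ref{eEq2} and satisfies \ref{eEq3}. Then each sum in \ref{eEq2} collapses to its single nonzero term, giving $P_{mn}X_{i_1}O_G Z_{j_1}P_{lk}^T = O_S$ and $P_{mn}X_{i_1}I_G Z_{j_1}P_{lk}^T = I_S$, which is precisely system \ref{eEq1} with the permutation-matrix solution $X = X_{i_1}$ and $Z = Z_{j_1}$.

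There is no serious obstacle here: the content of the lemma is a bookkeeping correspondence between a choice of indexed permutation pair and an indicator vector $(\lambda_{ij})$ concentrated at a single coordinate. The only thing worth emphasizing explicitly is that condition \ref{eEq3} forces the sums in \ref{eEq2} to reduce to a single summand, so the equivalence with \ref{eEq1} is immediate in both directions.
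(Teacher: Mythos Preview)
Your proposal is correct and matches the paper's own proof essentially line for line: both directions are handled by the same bookkeeping observation that a solution of structure \ref{eEq3} collapses the sums in \ref{eEq2} to a single term, giving exactly system \ref{eEq1} with $X=X_{i_1}$, $Z=Z_{j_1}$, and conversely any permutation-matrix solution of \ref{eEq1} corresponds to such an indicator vector.
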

\begin{proof}
Let permutation matrices $X$ and $Z$ be a solution of system \ref{eEq1}. Then, the following numbers are a solution of system \ref{eEq2}:
$$
\lambda_{ij} = \left \{ \begin{array}{cl}
1, & X_i = X ~ \wedge ~ Z_j = Z \\
0, & X_i \neq X ~ \vee ~ Z_j \neq Z \\
\end{array} \right.
$$
And visa-versa, if a solution of system \ref{eEq2} has structure \ref{eEq3}, then permutation matrices $X_{i_1}$ and $Z_{j_1}$ are a solution of system \ref{eEq1}.
\end{proof}
One might think that solutions \ref{eEq3} are the only solutions of system \ref{eEq2} possible. That is incorrect.
\begin{xca}
Let instance $(G,S)$ be as follows:
$$
O_G = \left ( \begin{array}{cc}
1 & 0 \\
0 & 1 \\
\end{array} \right ),
~ 
I_G = \left ( \begin{array}{cc}
0 & 1 \\
1 & 0 \\
\end{array} \right )
$$
$$
O_S = (1), ~ I_S = (1)
$$
The only permutation matrices for $n=2$ are 
$$
X_1 = \left ( \begin{array}{cc}
0 & 1 \\
1 & 0 \\
\end{array} \right ),
~
X_2 = \left ( \begin{array}{cc}
1 & 0 \\
0 & 1 \\
\end{array} \right )
$$
The following numbers are a solution of system \ref{eEq2} for the instance:
$$
\lambda_{11} = 1, ~ \lambda_{12} = 1, ~ \lambda_{21} = 0, ~ \lambda_{22} = 0.
$$
Really,
$$
P_{12}X_1 O_G X_1P_{12}^T  = (1), ~ P_{12}X_1 O_G X_2P_{12}^T  = (0)
$$
$$
P_{12}X_1 I_G X_1P_{12}^T  = (0), ~ P_{12}X_1 O_G X_2P_{12}^T  = (1)
$$
\end{xca}
Let's add $n-m$ isolated vertices to pattern graph $s$. That will make $P_{mn} = U_n$ in systems \ref{eEq1} and \ref{eEq2}. Then, system \ref{eEq1} can be rewritten as follows:
\begin{equation}
\label{eEq10}
\left \{ \begin{array}{l}
XO_{G}ZP_{l k}^{T} = O_{S} \\
\\
XI_{G}ZP_{lk}^{T} = I_{S} \\
\end{array}, \right.
\end{equation}
- and system \ref{eEq2} becomes as follows:
\begin{equation}
\label{eEq4}
\left \{ \begin{array}{l}
\sum_{i,j} \lambda_{ij}X_{i}O_{G}Z_{j}P_{lk}^{T} = O_{S} \\
\\
\sum_{i,j} \lambda_{ij}X_{i}I_{G}Z_{j}P_{lk}^{T} = I_{S} \\
\\
\lambda_{ij} \in \{0,1\} \\
\end{array} \right.
\end{equation}
\begin{lemma}
\label{l:structure}
Any solution of system \ref{eEq4} has structure \ref{eEq3}.
\end{lemma}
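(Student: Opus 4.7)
The plan is to exploit the column-sum structure of the incidence matrices together with the $\{0,1\}$ constraint on the $\lambda_{ij}$; this collapses the apparent combinatorial complexity to a single counting identity.

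First I would observe the crucial invariant: for every pair of permutation matrices $X_i$, $Z_j$, each of the $l$ columns of the $n\times l$ matrix $X_i O_G Z_j P_{lk}^T$ contains exactly one $1$. The reason is that $O_G$ is an out-incidence matrix, hence has exactly one $1$ per column; left-multiplication by $X_i$ permutes rows and right-multiplication by $Z_j$ permutes columns, both preserving the "one $1$ per column" property; and finally $P_{lk}^T$ merely selects the first $l$ columns of $X_i O_G Z_j$. The same property holds, by the same reasoning, for $O_S$ (the $n-m$ isolated vertices we added to $s$ only insert zero rows, leaving every column with its single $1$ intact).

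Next I would simply sum the entries in an arbitrary column $c \in \{1,\ldots,l\}$ of both sides of the first equation of system \ref{eEq4}. The right-hand side contributes column-sum $1$, because $O_S$ has exactly one $1$ per column. The left-hand side contributes
\[
\sum_{i,j} \lambda_{ij}\,\bigl(\text{column-sum of column $c$ of } X_i O_G Z_j P_{lk}^T\bigr) \;=\; \sum_{i,j} \lambda_{ij},
\]
by the invariant established above. Thus $\sum_{i,j}\lambda_{ij}=1$. Combined with $\lambda_{ij}\in\{0,1\}$, this forces exactly one coefficient $\lambda_{i_1 j_1}$ to equal $1$ and all others to equal $0$, which is precisely the structure \ref{eEq3}.

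There is no real obstacle in this argument; the only subtlety is recognizing that the column-sum-one property of incidence matrices is preserved under permutation multiplications and truncation by $P_{lk}^T$, after which the $\{0,1\}$-integrality of the $\lambda_{ij}$ does the rest. One could equally well carry out the argument using the second equation and $I_G$, $I_S$, since those matrices enjoy identical column structure.
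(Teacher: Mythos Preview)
Your proposal is correct and takes essentially the same approach as the paper: both arguments hinge on the fact that the incidence matrices $O_G$, $I_G$, $O_S$, $I_S$ each have exactly one $1$ per column, a property preserved under the permutation multiplications and truncation by $P_{lk}^T$. The paper phrases this qualitatively (``more than one $\lambda_{ij}=1$ would yield a non-$(0,1)$-matrix or a column with too many $1$'s''), whereas you extract the same information via a column-sum identity $\sum_{i,j}\lambda_{ij}=1$; this is a slightly crisper formalization of the same idea.
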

\begin{proof}
All matrices $O_G$, $O_S$, $I_G$, and $I_S$ have one and only one $1$ per column. But any solution of system \ref{eEq4} with more than one $\lambda_{ij} = 1$ will produce, on the left side of the system, either a non-$(0,1)$-matrix or a $(0,1)$-matrix with more than one $1$ per column.
\end{proof}
System \ref{eEq4} is a symmetric integer linear model of the NP-problems. The  symmetry is due to the explicit involvement of all $n!k!$ combinations of $n\times n$ and $k\times k$ permutation matrices. 
\newline\indent
For a given NP-instance, iteration of the following procedure can significantly reduce the instance's dimension and even solve the instance in polynomial time:
\begin{description}
\item[Step 1]
Build linear combinations of the equations of system \ref{eEq4} in order to make the right sides of the combinations equal to $0$;
\item[Step 2]
Rid system \ref{eEq4} of all those $\lambda_{ij}$ which are on the left sides of the combinations because, due to lemma \ref{l:structure}, they all are $0$ (the ridding can be partial when it is difficult to track all $\lambda_{ij} = 0$).
\end{description}
From this point of view, the ``NO''-instances are such instances for which the result of the procedure is that all $\lambda_{ij}$ are $0$.

\subsection{A symmetric linear model}

System \ref{eEq4} can be relaxed.
\begin{theorem}
System \ref{eEq4} has solutions iff the following system has solutions:
\begin{equation}
\label{eEq211}
\left \{ \begin{array}{l}
\sum_{i,j} \lambda_{ij}X_{i}O_{G}Z_{j}P_{lk}^{T} = O_{S} \\
\\
\sum_{i,j} \lambda_{ij}X_{i}I_{G}Z_{j}P_{lk}^{T} = I_{S} \\
\\
\sum_{ij}\lambda_{ij} = 1, ~ \lambda_{ij} \geq 0 \\
\end{array} \right.
\end{equation}
\end{theorem}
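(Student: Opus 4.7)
The plan is to prove the two directions of the equivalence separately. The forward direction is immediate: any solution of \ref{eEq4} has the structure \ref{eEq3} by lemma \ref{l:structure}, so the unique $(i_1,j_1)$ with $\lambda_{i_1 j_1} = 1$ (and all other $\lambda_{ij}=0$) already satisfies $\sum_{ij}\lambda_{ij}=1$ and $\lambda_{ij}\geq 0$, hence is a solution of \ref{eEq211}.

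For the converse, the key observation I would establish first is that each summand $X_i O_G Z_j P_{lk}^T$ (and likewise $X_i I_G Z_j P_{lk}^T$) is a $(0,1)$-matrix with exactly one $1$ per column. This property is inherited from $O_G$ and $I_G$: left multiplication by the permutation $X_i$ permutes rows, right multiplication by $Z_j$ permutes columns, and right multiplication by $P_{lk}^T$ just extracts the first $l$ columns unchanged, none of which can destroy the ``one $1$ per column'' property.

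Next, I would fix a column index $c$ and read off the $c$-th column of both sides of the first equation in \ref{eEq211}. The right-hand side is the $c$-th column of $O_S$, which is a standard basis vector $e_{r_c}$. The left-hand side is a convex combination $\sum_{ij}\lambda_{ij}\,e_{r_{ij,c}}$ of standard basis vectors (where $r_{ij,c}$ is the row in which the single $1$ of the $c$-th column of $X_i O_G Z_j P_{lk}^T$ sits). Since standard basis vectors are extreme points of the simplex of non-negative vectors summing to $1$, the only way this combination can equal $e_{r_c}$ is to have $r_{ij,c}=r_c$ for every $(i,j)$ with $\lambda_{ij}>0$. Varying $c$ and applying the same argument to the $I$-equation, I conclude that every $(i,j)$ with $\lambda_{ij}>0$ satisfies both $X_i O_G Z_j P_{lk}^T = O_S$ and $X_i I_G Z_j P_{lk}^T = I_S$. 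Picking any such $(i_0,j_0)$ and setting $\lambda_{i_0 j_0}=1$ with the rest zero yields a solution of \ref{eEq4}.

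I do not expect any serious obstacle here: the heart of the argument is the same column-counting observation that powered lemma \ref{l:structure}, now phrased for convex rather than $\{0,1\}$ coefficients by the extreme-point characterization of the simplex. The only minor subtlety is bookkeeping the role of $P_{lk}^T$ so that the ``one $1$ per column'' property survives the truncation, which is routine.
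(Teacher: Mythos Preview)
Your proposal is correct and follows essentially the same approach as the paper. The paper's proof of the converse is a one-line assertion (pick any $\lambda_{i_1j_1}>0$, set it to $1$ and the rest to $0$); your argument supplies the justification the paper leaves implicit, namely that the ``one $1$ per column'' structure of the incidence matrices forces every summand with positive weight to already equal $O_S$ (resp.\ $I_S$), which is exactly the extreme-point reasoning behind lemma~\ref{l:structure} adapted to convex coefficients.
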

\begin{proof}
Any solution of system \ref{eEq4} is a solution of system \ref{eEq211}. For any solution of system \ref{eEq211}, there is number $\lambda_{i_1j_1} > 0$. Replacing the number with $1$ and the rest of numbers $\lambda_{ij}$ with $0$ will produce a solution of system \ref{eEq4}. 
\end{proof}
The following symmetric linear program models NP-problems:
$$
\sum_{ij} \lambda_{ij} ~ \rightarrow ~ \min_{\lambda_{ij}},
$$
- under constrains \ref{eEq211}. In full accordance with the Yannakakis theorem \cite{yan}, the system has size $n!k!$. The ellipsoid/separation algorithm \cite{yan, gls} can be deployed to solve the program.
\newline\indent
A necessary condition for compatibility of system \ref{eEq211} is compatibility of the following $(n!+k!)$-size system:
$$
\left \{ \begin{array}{l}
\sum_{j} \lambda_{j}O_{G}Z_{j}P_{lk}^{T} = \sum_i \mu_i X_iO_{S} \\
\\
\sum_{j} \lambda_{j}I_{G}Z_{j}P_{lk}^{T} = \sum_i \mu_i X_iI_{S} \\
\\
\sum_{j}\lambda_{j} = 1, ~ \lambda_{j} \geq 0 \\
\\
\sum_{i}\mu_{i} = 1, ~ \mu_{i} \geq 0 \\
\end{array} \right.
$$

\subsection{A convex program}

\begin{theorem}
Instance $(G,S)$ has solution ``YES'' iff value $\sqrt{n + k}$ is solution of the following program:
\begin{equation}
\label{eEq221}
\sqrt{\sum_{ij} x^2_{ij} + \sum_{ij} z^2_{ij}} ~ \rightarrow ~ \max_{X,Z},
\end{equation}
- under constrains
\begin{equation}
\label{eEq222}
\left \{ \begin{array}{l}
O_{G}ZP_{l k}^{T} = X^T O_{S} \\
\\
I_{G}ZP_{lk}^{T} = X^T I_{S} \\
\\
\sum_{i} x_{ij} = 1, ~ \sum_{j} x_{ij} = 1, ~ x_{ij} \geq 0 \\
\\
\sum_{i} z_{ij} = 1, ~ \sum_{j} z_{ij} = 1, ~ z_{ij} \geq 0 \\
\end{array} \right.
\end{equation}
\end{theorem}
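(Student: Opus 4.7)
The plan is to exploit the same Birkhoff-von Neumann mechanism used in Theorem \ref{t:convex}, but applied simultaneously to the pair $(X,Z)$ living on the product of two Birkhoff polytopes. First I would observe that on an $n\times n$ doubly stochastic matrix the squared Frobenius norm satisfies $\sum_{ij}x_{ij}^2 \le \sum_{ij}x_{ij} = n$, because $0\le x_{ij}\le 1$ forces $x_{ij}^2\le x_{ij}$; equality holds iff every $x_{ij}\in\{0,1\}$, i.e.\ iff $X$ is a permutation matrix. The analogous bound $\sum_{ij}z_{ij}^2\le k$ holds for $Z$. Adding them yields $\sum_{ij}x_{ij}^2+\sum_{ij}z_{ij}^2\le n+k$ on the feasible set of \ref{eEq222}, with equality iff both $X$ and $Z$ are permutation matrices.

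Next I would establish the two directions of the ``iff''. For the ``YES'' direction, the earlier discussion of system \ref{eEq10} supplies permutation matrices $X,Z$ with $XO_GZP_{lk}^T=O_S$ and $XI_GZP_{lk}^T=I_S$; left-multiplying by $X^T$ and using $X^TX=U_n$ rewrites these as the two matrix equations of \ref{eEq222}, so the pair $(X,Z)$ is feasible, and by the bound above it attains $\sqrt{n+k}$. For the converse, suppose a feasible pair $(X,Z)$ attains $\sqrt{n+k}$; the norm bound forces both matrices to be permutation matrices, and then left-multiplying the same two equations of \ref{eEq222} by $X$ and using $XX^T=U_n$ recovers exactly system \ref{eEq10}, so $X$ and $Z$ witness a ``YES'' answer for $(G,S)$.

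The one subtlety I anticipate is the asymmetric form of the constraints: they read $O_GZP_{lk}^T=X^TO_S$ rather than $XO_GZP_{lk}^T=O_S$, and these are equivalent only when $X$ is orthogonal. For general doubly stochastic $X$ the constraints in \ref{eEq222} are a genuine relaxation of \ref{eEq10}, and a relaxed feasible point need not project back to a solution of \ref{eEq10}. This mismatch is harmless for the theorem, however, because the objective's upper bound is attained exclusively at permutation matrices, and at such $X$ the transposition $X^T$ coincides with $X^{-1}$, so both forms of the constraints collapse to the same equations. Once this point is in place the proof is a straightforward two-variable analogue of Theorem \ref{t:convex}.
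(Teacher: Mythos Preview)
Your proposal is correct and follows essentially the same route as the paper: both invoke the Birkhoff--von Neumann structure to argue that the objective on the product of doubly stochastic polytopes is bounded by $\sqrt{n+k}$, with equality precisely when $X$ and $Z$ are permutation matrices. Your write-up is in fact more explicit than the paper's two-sentence proof, since you spell out the elementwise inequality $x_{ij}^2\le x_{ij}$ that gives the norm bound and you carefully verify that the constraint form $O_GZP_{lk}^T=X^TO_S$ is equivalent to system~\ref{eEq10} exactly when $X$ is a permutation matrix.
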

\begin{proof}
The Birkhoff-von Neumann theorem about the double stochastic matrices implies that value $\sqrt{n+k}$ is solution of program \ref{eEq221} under the following constrains:
\begin{equation}
\label{eEq223}
\left \{ \begin{array}{l}
\sum_{i} x_{ij} = 1, ~ \sum_{j} x_{ij} = 1, ~ x_{ij} \geq 0 \\
\\
\sum_{i} z_{ij} = 1, ~ \sum_{j} z_{ij} = 1, ~ z_{ij} \geq 0 \\
\end{array} \right.
\end{equation}
The maximum is reachable then and only then when matrices $X$ and $Z$ are permutation matrices. 
\end{proof}
Convex program \ref{eEq221}/\ref{eEq222} can be solved in polynomial time by the ellipsoid method or by the inner point method \cite{nem1,nem2}.
\newline\indent
Linear polynomial size constrains \ref{eEq222} are asymmetric. Arc/vertex-relabeling of digraph $g$ will rotate vertices of the polytope appropriate to these constrains all over the vertices of the polytope defined by equations \ref{eEq223}.

\subsection{Asymmetric linear models}
\label{s:asymmetric2}

System \ref{eEq222} alone is insufficient for making decisions. It misses the following quadratic constrain:
\begin{equation}
\label{e:cond}
ZP_{lk}^T P_{lk} Z^T \leq U_k.
\end{equation}
\begin{proof}
If permutation matrix $Z$ is the ``arc'' part of a solution of system \ref{eEq222}, then condition \ref{e:cond} is true. On the other hand, let matrices $Z$ and $X$ be such a solution of system \ref{eEq222} that inequality \ref{e:cond} holds. Let's present double stochastic matrix $X$ as a convex combination of permutation matrices:
\[
X = \lambda_1X_1 + \lambda_2X_2 + \ldots, ~ \lambda_1 + \lambda_2 + \ldots = 1,
\]
- where $X_i$ are permutation matrices. Then, system \ref{eEq222} implies:
\[
G \geq O_G ZP_{lk}^T P_{lk} Z^T I_G^T = X^T O_S I_S^T X = X^T S X \geq X_1^T S X_1.
\]
Thus, permutation matrix $X_1^T$ is a solution of inequality \ref{eIn1}.
\end{proof}
\indent
An adequate asymmetric linear model can be build iteratively, as it was done in section \ref{s:asymmetric}. But, let's explore another approach.
\newline\indent
Let's return to symmetric system \ref{eEq211}. Let $\sigma$ be a set of all the matrices participating on the left side of the system:
\[
\sigma = \{ 
\left ( \begin{array}{c}
X_{i}O_{G}Z_{j}P_{lk}^{T} \\
X_{i}I_{G}Z_{j}P_{lk}^{T} \\
\end{array} \right )
~|~ i = 1,2,\ldots,n!; ~ j = 1,2,\ldots,k!\},
\]
- where $X_i$ and $Z_j$ are permutation matrices of size $n\times n$ and $k\times k$, appropriately. Let matrix $C$ be in convex hull of the set:
\[
C \in \mbox{conv}(\sigma)
\]
For certainty, let matrix $C$ be the center of the polytope:
\[
C = \frac{1}{n!k!}\sum_{i,j} \left ( \begin{array}{c}
X_{i}O_{G}Z_{j}P_{lk}^{T} \\
X_{i}I_{G}Z_{j}P_{lk}^{T} \\
\end{array} \right )
= 
\frac{1}{n!k!} \left ( \begin{array}{c}
\sum_i X_{i}O_{G}\sum_j Z_{j}P_{lk}^{T} \\
\sum_i X_{i}I_{G}\sum_j Z_{j}P_{lk}^{T} \\
\end{array} \right ) = 
\]
\[
= \frac{1}{n!k!} \left ( \begin{array}{c}
((n-1)!)_{n\times n}O_{G}((k-1)!)_{k\times l} \\
((n-1)!)_{n\times n}I_{G}((k-1)!)_{k\times l} \\
\end{array} \right )  
= \frac{(1)_{2n\times l}}{n},
\]
- because matrices $O_G$ and $I_G$ have one and only one $1$ per column. Let $\tau$ be the following set:
\[
\tau = \{B ~ | ~ B + C \in \sigma \}.
\]
Polytope $\tau$ is such a shift of polytope $\sigma$ that matrix $C$ gets in the origin of coordinates. Let matrices $B_i$ be a maximal linear independent subsystem of set $\tau$:
\begin{equation}
\label{e:basis}
\begin{array}{l}
L(B_1, B_2, \ldots, B_\beta) = L(\tau) \\
B_i \in \tau, ~ 1 \leq i \leq \beta \\
\end{array},
\end{equation}
- where $L(*)$ is the linear hull of its arguments. Let us emphasize that
\begin{equation}
\label{e:boundary}
\beta \leq 2nl.
\end{equation}
Matrices $B_i + C, ~ i = 1,2,\ldots, \beta$ are a basis of the minimal hyperplane containing set $\sigma$.
\begin{theorem}
\label{t:asymmetric}
NP-instance $(G,S)$ has solution ``YES'' iff the following linear system is compatible:
\begin{equation}
\label{e:asymmetric}
\sum_{i=1}^\beta y_i B_i = \left ( \begin{array}{c}
O_S \\
I_S \\
\end{array} \right ) - C,
\end{equation}
- where numbers $y_i$ are the unknown.
\end{theorem}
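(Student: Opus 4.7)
The claim is an iff, and the forward direction is a direct translation of the machinery already developed in the section, while the backward direction carries the real content. First, if $(G,S)$ is a ``YES''-instance then, via the reduction underlying system \ref{eEq10} (with $s$ padded by $n-m$ isolated vertices so that $P_{mn}=U_n$), there exist permutation matrices $X_{i_1},Z_{j_1}$ with $X_{i_1}O_{G}Z_{j_1}P_{lk}^{T}=O_{S}$ and $X_{i_1}I_{G}Z_{j_1}P_{lk}^{T}=I_{S}$. The corresponding element of $\sigma$ is precisely $\left(\begin{array}{c}O_S\\I_S\end{array}\right)$, so $\left(\begin{array}{c}O_S\\I_S\end{array}\right)-C\in\tau\subseteq L(\tau)=\mathrm{span}(B_1,\ldots,B_\beta)$, and the required coefficients $y_i$ exist.

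Conversely, suppose the system \ref{e:asymmetric} is compatible. Because $C$ was built as the centroid of $\sigma$ and hence lies in $\mathrm{conv}(\sigma)$, solvability of \ref{e:asymmetric} is equivalent to $\left(\begin{array}{c}O_S\\I_S\end{array}\right)\in C+L(\tau)=\mathrm{aff}(\sigma)$. The plan is then to unpack this as an affine identity
\[
\left(\begin{array}{c}O_S\\I_S\end{array}\right) = \sum_{i,j}\alpha_{ij}\left(\begin{array}{c}X_{i}O_{G}Z_{j}P_{lk}^{T}\\X_{i}I_{G}Z_{j}P_{lk}^{T}\end{array}\right),\qquad\sum_{i,j}\alpha_{ij}=1,
\]
with real coefficients $\alpha_{ij}$ (possibly negative), and then to extract a single index pair $(i_1,j_1)$ for which $X_{i_1}O_{G}Z_{j_1}P_{lk}^{T}=O_{S}$ and $X_{i_1}I_{G}Z_{j_1}P_{lk}^{T}=I_{S}$, yielding the desired embedding of $s$ into $g$.

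The main obstacle is exactly this extraction step. In the convex case (Lemma \ref{l:structure} and the analogous relaxation for system \ref{eEq211}) nonnegativity of the weights forces a single surviving term, because $O_G,I_G,O_S,I_S$ are $(0,1)$-matrices with exactly one $1$ per column and nonnegative combinations cannot cancel. Here the $\alpha_{ij}$ may be negative, so cancellations across distinct permutation pairs become possible and the column-by-column matching argument of Lemma \ref{l:structure} is no longer conclusive on its own. The hard step is therefore to show that any matrix in $\mathrm{aff}(\sigma)$ whose two blocks share the column-stochastic $(0,1)$-structure of $O_S$ and $I_S$ already belongs to $\sigma$. I would attack this by combining the total unimodularity of the incidence matrices, the column-sum identities that are preserved by every affine combination, the orbit structure of $\sigma$ under the two permutation group actions, and the dimension bound \ref{e:boundary} on $\beta$, trying to rule out parasitic affine combinations that do not correspond to an actual subgraph embedding; this combinatorial-polyhedral core is where I expect the bulk of the technical effort to lie.
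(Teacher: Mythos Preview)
Your forward direction is fine and matches the paper. The gap is in the converse. You correctly reduce compatibility of \ref{e:asymmetric} to the statement that $\left(\begin{array}{c}O_S\\I_S\end{array}\right)\in\mbox{aff}(\sigma)$, and you correctly observe that this yields only an affine combination with possibly negative $\alpha_{ij}$, so the column-by-column argument of Lemma \ref{l:structure} no longer applies. But you then stop, naming the obstacle without supplying an idea that resolves it; ``total unimodularity, orbit structure, and the bound \ref{e:boundary}'' is not a mechanism for defeating cancellations among signed coefficients, and nothing in your outline explains why a $(0,1)$ column-stochastic point of $\mbox{aff}(\sigma)$ must already lie in $\sigma$.

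The paper's device is precisely to convert the affine problem back into a convex one before invoking any structure. Because $C$ is the centroid of $\sigma$, it lies in the relative interior of $\mbox{conv}(\sigma)$; hence for every $\Phi\in L(\tau)$ there is $\mu_{\max}>0$ with $\mu\Phi\in\mbox{conv}(\tau)$ for all $\mu\in[0,\mu_{\max}]$. Taking $\Phi=\left(\begin{array}{c}O_S\\I_S\end{array}\right)-C$ produces, for some $\mu>0$, a genuinely nonnegative combination
\[
\sum_{i,j}\lambda_{ij}\left(\begin{array}{c}X_iO_GZ_jP_{lk}^T\\X_iI_GZ_jP_{lk}^T\end{array}\right)=\mu\left(\begin{array}{c}O_S\\I_S\end{array}\right)+(1-\mu)\,C,\qquad \sum_{i,j}\lambda_{ij}=1,\ \lambda_{ij}\ge 0,
\]
which is the intermediate system \ref{e:interim}. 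Now the one-$1$-per-column structure of $O_G,I_G,O_S,I_S$ is usable: the right-hand side carries only the two entry values $(1-\mu)/n$ and $\mu+(1-\mu)/n$, and the paper argues from the resulting constraints that the boundary value $\mu=1$ is attained, recovering system \ref{eEq211} and hence a ``YES'' answer. This scaling-into-convexity step is the missing idea in your proposal; your plan to work directly with signed affine weights never reaches a setting where Lemma~\ref{l:structure}-type reasoning can bite.
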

\begin{proof}
Necessity. Let instance $(G,S)$ have solution ``YES''. Then, system \ref{eEq211} is compatible. Let's subtract matrix $C$ from both sides of the system; on the left side, let's decompose $C$ over numbers $\lambda_{ij}$; and let's replace the matrices resulting on the left side with their decompositions over basis \ref{e:basis}. The coefficients resulting on the left side are a solution of system \ref{e:asymmetric}.
\newline\indent
Sufficiency. Let's notice that by the definition of set $\tau$,
\[
\forall~\Phi \in L(\tau) ~\exists \mu_{\max} > 0:~\mu \in [0,\mu_{\max}]~ \Rightarrow ~ \mu \Phi \in \mbox{conv}(\tau).
\]
Thus, if system \ref{e:asymmetric} is compatible, then there are such numbers $\mu$ and $\lambda_{ij}$ that the following system is compatible:
\begin{equation}
\label{e:interim}
\left \{ \begin{array}{l}
\sum_{i,j} \lambda_{ij}X_{i}O_{G}Z_{j}P_{lk}^{T} = \mu O_{S} + (\frac{1-\mu}{n})_{n\times l} \\
\\
\sum_{i,j} \lambda_{ij}X_{i}I_{G}Z_{j}P_{lk}^{T} = \mu I_{S} + (\frac{1-\mu}{n})_{n\times l} \\
\\
\sum_{ij}\lambda_{ij} = 1, ~ \lambda_{ij} \geq 0 \\
\end{array} \right.
\end{equation}
By definition, matrices $O_S$ and $I_S$ are $(0,1)$-matrices with one and only one 1 per column. Then, the matrix on the right side of system \ref{e:interim} has only two different elements:
\[
\frac{1-\mu}{n}, ~ \mu + \frac{1-\mu}{n}.
\]
By definition, matrices $O_G$ and $I_G$ are $(0,1)$-matrices with one and only one 1 per column. Then, all $\lambda_{ij}$ on the left side of system \ref{e:interim} will total to two numbers $\xi$ and $\eta$, as well:
\[
\left \{ \begin{array}{l}
\xi = (1-\mu)/n \\
\eta = \mu + (1-\mu)/n \\
(n-1) \xi + \eta = 1 \\
\xi, \eta \geq 0, ~ \mu >0 \\
\end{array} \right.
\]
Among solutions of the last system, there is the boundary solution (when $\mu = \mu_{\max}$):
\[
\mu =1, ~ \eta = 1, ~ \xi = 0.
\]
Substitution of $\mu = 1$ in system \ref{e:interim} produces system \ref{eEq211}. Thus, compatibility of system \ref{e:asymmetric} implies compatibility of system \ref{eEq211}, i.e. solution ``YES'' for instance $(G,S)$.
\end{proof}
Theorem \ref{e:asymmetric} may be seen as a comparison of given digraph $g$ with the complete graph of size $n$. Due to estimation \ref{e:boundary}, system \ref{e:asymmetric} is an asymmetric polynomial size linear model of NP-problems. The asymmetry is due to the selection of basis \ref{e:basis}.
\newline\indent
The selection of basis \ref{e:basis} is a P-problem. The basis selection is reducible to the selection of maximal linear independent subsystems from permutation matrices of sizes $n\times n$ and $k\times k$, appropriately. The matrices are vertices of the appropriate Birkhoff polytopes. Any $(n-1)^2$ different permutation matrices of size $n\times n$ and any $(k-1)^2$ different permutation matrices of size $k\times k$ will produce a polynomial size system for the basis selection. The basis selection is reducible to the solution of a polynomial size system of linear equations. 

\subsection{Hamiltonian graph}

For the Hamiltonian cycle problem \cite{karp,sf,tsp}, the pattern matrix $S$ may be any circular permutation matrix, for example matrix \ref{e:hc}. Let's label arcs of cycle $s$ with the indexes of their end-vertices. Then,
\[
O_S = S, ~ I_S = U_n.
\]
Substitution of these matrices in system \ref{e:asymmetric} gives a polynomial size linear system. Due to theorem \ref{t:asymmetric}, the system is compatible iff digraph $g$ is a Hamiltonian digraph.

\end{document}